\newcommand{\sfunction}[1]{\textsf{\textsc{#1}}}
\newtheorem{theorem}{Theorem}
\newtheorem{lemma}{Lemma}
\algrenewcommand\algorithmicforall{\textbf{\PW{for each}}}
\providecommand{\keywords}[1]
{
  \small	
  \textbf{\textit{Keywords ---}} #1
}
\title{Kernel-based learning with guarantees for~multi-agent applications*}
\author{Krzysztof Kowalczyk\inst{1}\orcidID{0009-0003-3972-5738} \and
Pawe\l{} Wachel\inst{1}\orcidID{0000-0002-7353-2310} \and
Cristian R. Rojas\inst{2}\orcidID{0000-0003-0355-2663}}
\date{15 April 2024}
\begin{document}

\author{	Krzysztof Kowalczyk\\
	Department of Control Systems and Mechatronics\\
	Wroc\l{}aw University of Science and Technology\\
	Wroc\l{}aw, Poland \\
	\texttt{krzysztof.kowalczyk@pwr.edu.pl} \\
	\And
Pawe\l{} Wachel\\
	Department of Control Systems and Mechatronics\\
	Wroc\l{}aw University of Science and Technology\\
	Wroc\l{}aw, Poland \\
	\texttt{pawel.wachel@pwr.edu.pl} \\
	\And
Cristian R. Rojas\\
	School of Electrical Engineering and Computer Science\\
	KTH Royal Institute of Technology\\
	Stockholm, Sweden \\
	\texttt{crro@kth.se} 
}

\maketitle              
\begin{abstract}
This paper addresses a kernel-based learning problem for a network of agents locally observing a latent multidimensional, nonlinear phenomenon in a noisy environment. 
We propose a learning algorithm that requires only mild \textit{a priori} knowledge about the phenomenon under investigation and delivers a model with corresponding non-asymptotic high probability error bounds.

Both non-asymptotic analysis of the method and numerical simulation results are presented and discussed in the paper.

\end{abstract}
\keywords{Multi-agent systems  \and distributed learning \and high-probablility guarantees}
\let\thefootnote\relax\footnotetext{*A preprint submitted to ICCS2024.}
\section{Introduction}

A multi-agent system is a network of autonomous entities called agents that share information and collaborate to solve tasks usually beyond an individual agent's scope~\cite{jain2010innovations}. This broad description fits well in the recent research trends like cloud computing \cite{sim2011agent}, or Industry 4.0 \cite{sakurada2020multi}, and allows multi-agent systems to find applications in many other fields. In robotics, in scenarios including groups of mobile robots or swarms of drones, it is necessary to avoid collisions or obstacles and to navigate collaboratively~\cite{rasheed2022review}. The agent-based approach is also used for controlling smart grids, \textit{i.e.}, efficient and robust power systems~\cite{mahela2020comprehensive}. We can also find numerous other examples, like analyzing the traffic flow \cite{malecki2023multi} or modelling purchasing decisions \cite{jkedrzejewski2022purchasing}.

Inspired by these multidisciplinary applications, we formally discuss the general problem of distributed learning, with a particular focus on the modelling of nonlinearities under limited information, \textit{cf.} \cite{lagosz2021identification}. In the considered scenario, every agent (node) locally observes the outcome of some unknown global phenomenon of interest. Although the agents aim to provide a non-local comprehensive model of the phenomenon, this may be not possible for individual nodes due to the limited range of their own observations. Thus, collaboration is necessary. Nonetheless, we assume that the agents cannot communicate freely throughout the entire network, but a single agent can only interact with a narrow group of its neighbourhood nodes (\textit{cf.} Fig. \ref{fig:Drones}).

One can find numerous approaches related to this problem in the literature, among which Kalman-based filtering \cite{cattivelli2010distributed}, diffusion \cite{lopes2008diffusion}, and consensus \cite{bertrand2011consensus} techniques can be distinguished; see \textit{e.g.} \cite{modalavalasa2021review} for a more extensive discussion. While our approach is motivated by the abovementioned methods, we introduce, however, a few substantial modifications. In particular, regarding the investigated nonlinear phenomenon, we require only limited \textit{a priori} knowledge, usually insufficient for many parametric estimation techniques proposed so far. We use kernel regression for efficient non-parametric modelling and provide corresponding error-bound guarantees that hold for a finite number of samples. 
\begin{figure}[ht]
    \centering
\includegraphics[width=0.8\textwidth]{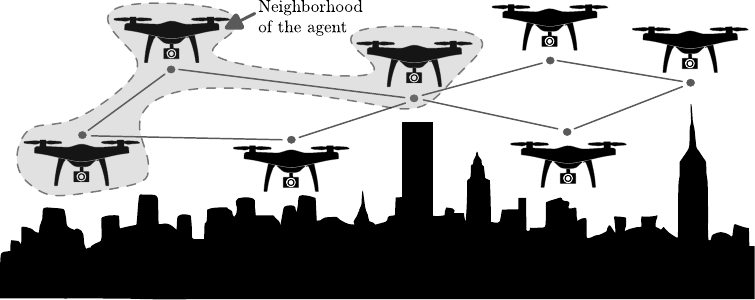}
    \caption{A network of distributed agents with highlighted neighbourhood of a selected node.}
    \label{fig:Drones}
\end{figure}
The algorithm proposed in this paper is an extension of the method introduced in \cite{WACHEL2023100912}. Our main contributions are as follows:
\begin{itemize}
    \item We propose a new algorithm for distributed learning in the multi-agent scenario, where a group of agents locally observes a nonlinear phenomenon.
    \item We formally investigate the algorithm, allowing multidimensionality of explanatory data (measurements) and derive corresponding confidence bounds.
    \item We show that the resulting error bounds are independent of the dimension of the explanatory measurements.
    \item The formal analysis of the algorithm is performed under mild assumptions on the phenomenon (essentially only Lipschitz continuity is required). In particular, we do not assume any parametric structure of the investigated phenomenon.
    \item The data exchange protocol allows for obtaining results close to the corresponding centralized version of the problem.
    \item The algorithm has a simple and direct construction (internal optimization routines and other numerically intensive procedures are not involved).
\end{itemize}
    
The paper is organized as follows. Section~$2$ describes the considered problem and contains the required formal assumptions. In Sections~$3$ and~$4$, the main algorithm is presented, and its estimation guarantees are derived. In Section~$5$  the results of simulation experiments are presented and discussed, whereas concluding remarks are given in Section~$6$. Finally, technical lemmas and proofs are collected in the appendix.  

\section{Problem formulation}
We investigate a problem of distributed learning, where a group of agents observes an unknown phenomenon in a noisy environment and aims to provide noise-free estimations with high probability guarantees for a given region of interest. 
    
    As is often the case in practice, the agents operate in restricted subspaces; see Fig. \ref{fig:Domains}. In consequence, their observations/measurements have merely a local character, and the learning outcomes outside their local domains may be unsatisfactory due to poor prior knowledge and/or (significant) sensing inaccuracies. Hence, to provide better-quality estimation results, it is often required to introduce collaboration between the agents. In Fig. \ref{fig:Domains}, we provide an example of a nonlinear phenomenon and the local scopes (operating areas) of agents that attempt to model it. As it can be noticed, due to the scattering of agents and their relatively narrow measuring domains, it would be impossible for a single agent to provide an even rough approximation of the whole phenomenon. 
\begin{figure}[ht]
    \centering
\includegraphics[width=0.6\textwidth]{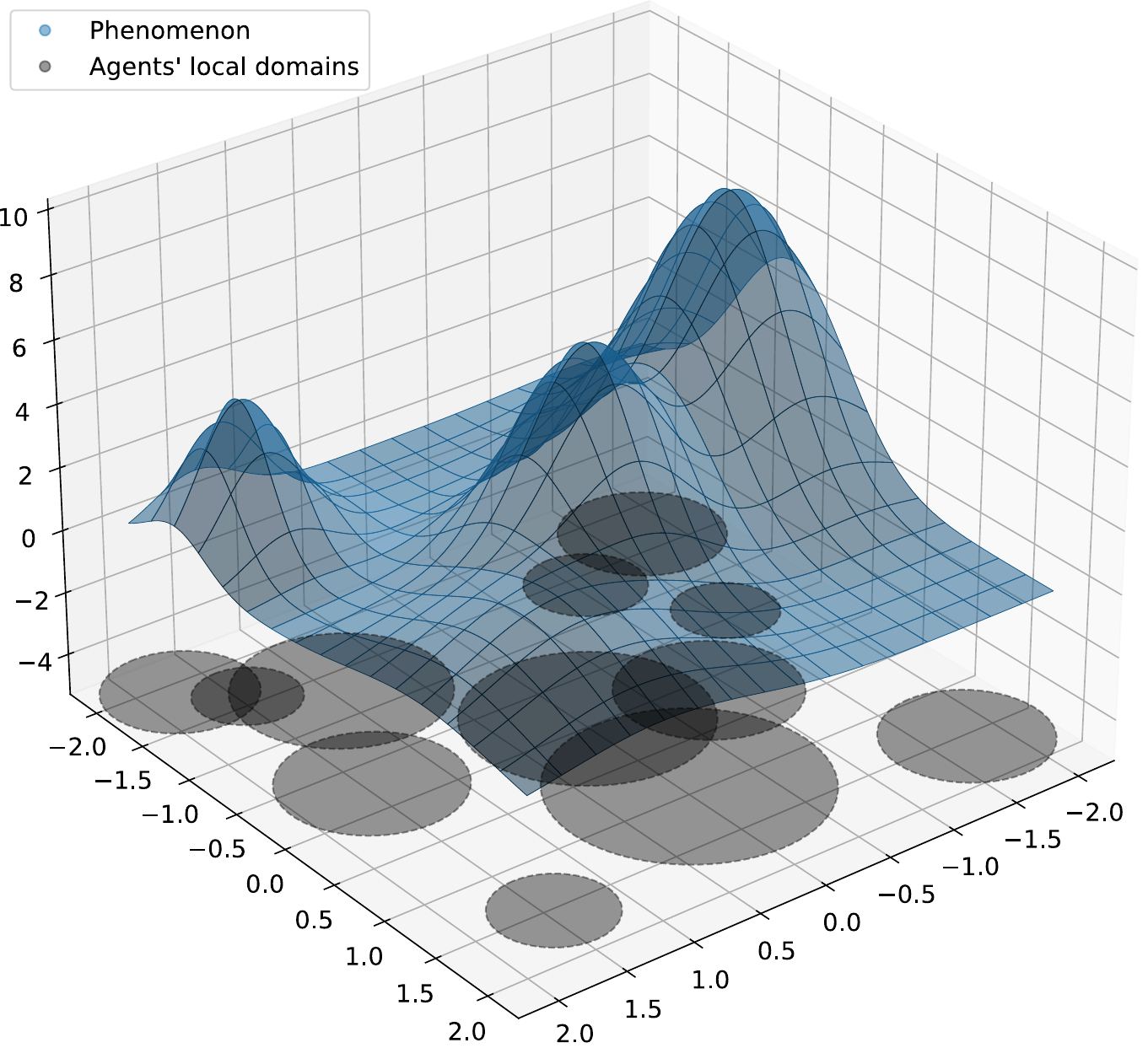}
    \caption{Example of a nonlinear phenomenon (blue) with marked agents' local domains of observations/measurements (grey).}
    \label{fig:Domains}
\end{figure}
\begin{subsection}{Network setup}
We consider a set of $M$ agents and model their cooperation via a connected and undirected graph $\mathcal{G}=(\mathcal{M},\mathcal{E})$ with $\mathcal{M}=\{1,2,\dots,M\}$ nodes (corresponding to the agents) and a set of unweighted edges $\mathcal{E}$. To reflect possible communication restrictions and to reduce the communication burden, we assume that two nodes $i,j\in\mathcal{M}$ can exchange information if and only if they are directly connected, \textit{i.e.}, if $\{i,j\}\in\mathcal{E}$. Thus, we define the neighbourhood of a node $i\in\mathcal{M}$ as the set $\mathcal{N}_i=\{j\colon \{i,j\}\in\mathcal{E}\}$. 
\end{subsection}

\begin{subsection}{Phenomenon and observations}
In the considered setup, at every time step $t\in\mathbb{N}$ , every agent $k\in\mathcal{M}$ obtains an explanatory data point $\xi_{k,t}\in\mathbb{R}^p$, for some fixed $p\in\mathbb{N}$, and observes a noisy outcome $y_{k,t}$ of the latent nonlinear phenomenon modelled by an unknown nonlinear mapping $m:\mathcal{D}\subset\mathbb{R}^p\rightarrow\mathbb{R}^d$,
\begin{equation}
    y_{k,t}=m(\xi_{k,t})+\eta_{k,t}, \quad k\in\mathcal{M},\quad t\in\mathbb{N},
\end{equation}
where $\eta_{k,t}$ denotes an additive noise. 
\end{subsection}

This paper aims to provide a distributed inference of $m$ under mild \textit{a priori} knowledge about its structure. Hence, the following assumptions regarding the observed phenomenon and the additive noise have a general form. In the sequel, for simplicity of notation, we will use the symbol $a_{1:m}$ as a short for a sequence $a_1,\dots,a_m$.

    \textit{Assumption} $1$. The latent phenomenon of interest, $m\colon \mathcal{D} \subset \mathbb{R}^p\rightarrow \mathbb{R}^d$, is a Lipschitz continuous mapping, \textit{i.e.}
    \begin{equation}
        \norm{m(\xi)-m(\xi')}_2\leq L\norm{ \xi-\xi'}_2,\quad \forall\;\xi,\xi'\in\mathcal{D},
    \end{equation}
    for a known constant $0 \leq L<\infty$.
    
\textit{Assumption} $2$. The explanatory sequence $\{\xi_{t}\in\mathbb{R}^p \colon t\in\mathbb{N}\}$ is an arbitrary stochastic process. 

\textit{Assumption} $3$. The disturbance $\{\eta_t\in\mathbb{R}^d \colon\,t\in\mathbb{N}\}$ is a sub-Gaussian stochastic process, that is, there exists some $\sigma>0$ such that, for every $\gamma_t \in\mathbb{R}^d$ (possibly a function of $\xi_t$), and every $t\in\mathbb{N}$,
\begin{equation}\label{eq:A_eta}
\mathbb{E}\{ \exp ( \gamma_t^\top \eta_t)|\eta_{1:t-1}, \xi_{1:t}\} \leq \exp
\left( \frac{\gamma_t^\top\gamma_t\sigma ^{2}}{2}\right).
\end{equation}

The above requirements have a somewhat general character and are inspired by the real-world properties of many technical processes. Informally, Assumption~$1$ allows, in particular, any nonlinear function with a limited rate of increase (or decrease), and Assumption~$3$ admits any bounded \textit{i.i.d.}~disturbances with zero mean, independent of the explanatory data.

\section{Local agents' modelling}
To construct the proposed learning technique, we begin from a single-agent perspective. Given a fixed time instant $t$ and a set of local data measurements, we define for agent $k\in \mathcal{M}$ the following kernel regression estimator:
\begin{equation}\label{eq:NW1}
  \begin{aligned}
\hat{\mu}_{k,t}(x):=&\sum_{n=1}^{t}\frac{K_h(x,\xi_{k,n})}{\kappa_{k,t}(x)}y_{k,n}=:\frac{\psi_{k,t}(x)}{\kappa_{k,t}(x)},\\
\quad\kappa _{k,t}(x):=&\kappa_{k,t}(x,h)=\sum_{n=1}^{t}K_h(x,\xi_{k,n}),
 \end{aligned}
\end{equation}
with $K_h(x,\xi):=K(\norm{x-\xi}_2/h)$, and where $K$, $h$ are the kernel function and the bandwidth parameter, respectively. As can be easily noticed, $\hat{\mu}_{k,t}$ is a local average of output measurements with weights controlled by kernel $K$. To ensure appropriate statistical properties of $\hat{\mu}_{k,t}$, we make the following assumption:

\textit{Assumption} $4$. The kernel $K\colon \mathbb{R}\rightarrow\mathbb{R}$ is such that $0 \leq K(v)\leq1$ for all $v\in\mathbb{R}$. Also, $K(v)=0$ for all $|v|>1$.

We are now about to develop the main technical result, which is the basis for the network estimation algorithm introduced in the sequel (\textit{cf.} \cite{WACHEL2023100912}).
\begin{lemma}\label{L:local}
Let Assumptions 1--4 be in force. Consider the estimator $\hat{\mu}_{k,t}\in\mathbb{R}^d$ and fix a bandwidth parameter $h$. Let $x \in \mathcal{D}\subset\mathbb{R}^p$ be  fixed or in general a measurable function of $\eta_{k,1:t-1}, \xi_{k,1:t}$ (e.g., $x=\xi_{k,t}$). Then, for every $0 <\delta <1$, with probability at least $1-\delta $, if $\kappa _{k,t}(x)\neq 0$,
\begin{align}
\lVert \hat{\mu}_{k,t}(x) -m(x)
\rVert_2 \leq \beta_{k,t}(x),\\  \label{b1}
\quad\text{ where}\quad
\beta_{k,t}(x) := L h+2\sigma \frac{\alpha _{k,t}(x ,\delta )}
{\kappa _{k,t}(x)}
\end{align}
and 
\begin{equation} \label{b2}
\alpha _{k,t}( x ,\delta ) :=
\begin{cases} 
\sqrt{\log(\delta^{-1}2^{d/2})}, & \textnormal{for } \kappa _{k,t}(x) \leq 1 \vspace{3pt}\\ 
\sqrt{\kappa_{k,t}(x) \log \left( \delta^{-1} \bigl( 1 + \kappa_{k,t}(x)\bigr)^{d/2}\right)}, & \textnormal{for } \kappa _{k,t}(x) > 1.
\end{cases}
\end{equation}
\end{lemma}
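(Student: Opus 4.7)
The plan is a standard bias–variance decomposition followed by a self-normalized martingale concentration argument for the noise term via the method of mixtures. I would begin by writing, whenever $\kappa_{k,t}(x)\neq 0$,
\begin{equation*}
\hat{\mu}_{k,t}(x) - m(x) = \frac{1}{\kappa_{k,t}(x)} \sum_{n=1}^{t} K_h(x,\xi_{k,n}) \bigl[m(\xi_{k,n})-m(x)\bigr] + \frac{S_t}{\kappa_{k,t}(x)}, \qquad S_t := \sum_{n=1}^{t} K_h(x,\xi_{k,n})\,\eta_{k,n}.
\end{equation*}
The first (bias) sum is handled deterministically: by Assumption~4, $K_h(x,\xi_{k,n})$ vanishes whenever $\|x-\xi_{k,n}\|_2>h$, and on the complementary set Assumption~1 gives $\|m(\xi_{k,n})-m(x)\|_2\leq Lh$. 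Pulling the uniform bound $Lh$ out of the sum and cancelling with $\kappa_{k,t}(x)$ in the denominator yields exactly the $Lh$ summand of $\beta_{k,t}(x)$.

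For the stochastic term, set $Z_n := K_h(x,\xi_{k,n})\in[0,1]$ and $V_t := \sum_{n=1}^{t} Z_n^2$. For each fixed $\gamma\in\mathbb{R}^d$, the exponential process
\begin{equation*}
M_t(\gamma) := \exp\!\Bigl(\gamma^{\top} S_t - \tfrac{\sigma^2}{2}\|\gamma\|_2^2\, V_t\Bigr)
\end{equation*}
is a nonnegative supermartingale with $\mathbb{E}[M_t(\gamma)]\leq 1$: using Assumption~3 step by step with the test vector $\gamma_n = Z_n\gamma$ one obtains $\mathbb{E}[M_n(\gamma)\mid\eta_{k,1:n-1},\xi_{k,1:n}]\leq M_{n-1}(\gamma)$. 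Mixing $\gamma$ against a zero-mean Gaussian prior with covariance $\sigma^{-2}I_d$ and carrying out the Gaussian integral yields the scalar mixture
\begin{equation*}
\overline{M}_t = (1+V_t)^{-d/2}\,\exp\!\left(\frac{\|S_t\|_2^2}{2\sigma^2(1+V_t)}\right),
\end{equation*}
which still satisfies $\mathbb{E}[\overline{M}_t]\leq 1$ by Fubini. Markov's inequality then gives, with probability at least $1-\delta$,
\begin{equation*}
\|S_t\|_2 \leq \sigma\sqrt{2(1+V_t)\,\log\bigl(\delta^{-1}(1+V_t)^{d/2}\bigr)}.
\end{equation*}

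It remains to match this with \eqref{b2}. Since $Z_n\in[0,1]$, we have $V_t\leq \kappa_{k,t}(x)$. If $\kappa_{k,t}(x)\leq 1$ then $1+V_t\leq 2$, collapsing the bound to $\|S_t\|_2\leq 2\sigma\sqrt{\log(\delta^{-1}2^{d/2})}$; if $\kappa_{k,t}(x)>1$ then $1+V_t\leq 1+\kappa_{k,t}(x)\leq 2\kappa_{k,t}(x)$, giving $\|S_t\|_2\leq 2\sigma\sqrt{\kappa_{k,t}(x)\log(\delta^{-1}(1+\kappa_{k,t}(x))^{d/2})}$. Dividing by $\kappa_{k,t}(x)$ and adding the bias contribution produces $\beta_{k,t}(x)$.

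The main obstacle is the self-normalized step: to recover a tail whose dependence on the data is exactly $\sqrt{\kappa_{k,t}(x)\log(\kappa_{k,t}(x)^{d/2}/\delta)}$ rather than a crude $\sqrt{t}$ scaling from a pointwise Hoeffding/Azuma estimate, the method of mixtures is essential, and the Gaussian prior must be matched (up to scaling) to the sub-Gaussian parameter $\sigma$. A secondary technical subtlety is verifying that $\gamma_n=Z_n\gamma$ is a legitimate test vector in Assumption~3 at every step; when $x$ is random and depends on the full history $\eta_{k,1:t-1},\xi_{k,1:t}$ (as in $x=\xi_{k,t}$), this requires either a stopping-time argument or an enlargement of the natural filtration so that the weights $Z_n$ remain predictable when the exponential supermartingale is constructed.
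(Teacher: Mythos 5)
Your proof is correct and takes essentially the same route as the paper: the identical bias--variance decomposition with the $Lh$ term obtained from the kernel's compact support plus Lipschitz continuity, the same method-of-mixtures self-normalized bound (the paper's Lemmas~2 and~3, which normalize $\sigma=1$ and mix against a standard Gaussian prior --- an equivalent rescaling of your $\mathcal{N}(\mathbf{0},\sigma^{-2}I_d)$ prior), and the same final case analysis using $V_t\leq\kappa_{k,t}(x)$ and $1+\kappa_{k,t}(x)\leq 2\kappa_{k,t}(x)$. The filtration subtlety you flag for history-dependent $x$ is genuine but is equally implicit in the paper, whose Lemma~2 simply hypothesizes conditional sub-Gaussianity of $\eta_t$ given the weight process $v_{1:t}$ rather than deriving it from Assumption~3.
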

\begin{proof}
See the Appendix. 
\end{proof}
In Lemma 1, we provide error bounds for local (single-agent) estimates that hold with probability $1-\delta$,  where $\delta$ is a user-defined failure probability. The Lipschitz constant $L$ and the noise proxy variance $\sigma$ are, however, required to be known (in practice, at least upper bounds on these quantities are needed).
Observe that, since the denominator $\kappa_{k,t}(x)$ depends on the measurements, it will increase with the number of data samples, which will result in tighter error bounds. 

Due to the fact, that the dimensionality of the output influences the bounds, for higher $d$'s, it may be worth considering techniques of MIMO system decompositions as $\textit{e.g.}$ \cite{wachel2023learning}.
\section{Distributed modelling -- data aggregation}
Having a single-agent estimator, we are now ready to introduce a distributed modelling procedure.

According to the considered approach, every agent $k$ spreads its local estimations by broadcasting tuples of essential data $T_{k,t}(x)=(\psi_{k,t}(x),\allowbreak\kappa_{k,t}(x),\allowbreak x)$, which contains locally computed numerator, denominator and the estimation point, to its neighbourhood $\mathcal{N}_k$. The acquired tuples are then stored in set $\mathbb{T}_k$. To avoid data repetition in a container of tuples, only a single tuple from a single agent and fixed estimation point $x$ is included in  $\mathbb{T}_k$, \textit{i.e.}, the newer (incoming) tuples overwrite the older ones. 

More precisely, in the following pseudo-code (Algorithm 1)   we propose a data exchange and aggregation procedure for every agent in the network. The proposed mechanism allows for the online generation/update of tuple containers.
\begin{algorithm}[H]
\caption{Data exchange and aggregation\Comment{Agent $k$}} \label{Alg:1}

\begin{algorithmic}[1]
\State \textbf{input:} $\mathcal{X}$ \Comment{Estimation points}
\For{$t=1,2,\dots$}
    \State $\sfunction{Get }(\xi_{k,t},y_{k,t})$ \Comment{Get local measurement}
    \If {acquired\_new\_tuple}
        \State $\sfunction{Update }\mathbb{T}_k$
    \EndIf
    \If {send\_local\_data}
    \State $\sfunction{Select }x\in\mathcal{X}$ \Comment{Select an estimation point}
    \State $\sfunction{Evaluate }\psi_{k,t}(x),\kappa_{k,t}(x)$ 
    \State $T_{k,t}(x)\leftarrow(\psi_{k,t}(x),\kappa_{k,t}(x),x)$
    \EndIf
    \If {send\_acquired\_data}
        \State $\sfunction{Select }T_i(x)\in\mathbb{T}_k$
    \EndIf
    \State $\sfunction{Broadcast } \text{selected tuple}$ \Comment{Send data to the neighbors}
    \State \textbf{end}
    
\EndFor
\State \textbf{end}
\end{algorithmic} 
\end{algorithm}
The above algorithm requires a few comments. We assume that all the agents work on the same set $\mathcal{X}$ (\textit{i.e.}, $x\in\mathcal{X}$) and they can freely share their data. We do not specify here when the agents should transfer their local data and when they acquire information from their neighbourhoods. Currently, we leave this open for the user, by setting the flags \textit{send\_local\_data} and \textit{send\_acquired\_data}  (in the experiments these flags were set randomly).

Following the data exchange and aggregation routine proposed in Algorithm~1, every agent builds a tuple set $\mathbb{T}_k$ that will be used next to construct a model of $m(\cdot)$. For every agent $k$ with $\mathbb{T}_k$, we define an estimator that combines all the acquired data as follows:
\begin{equation*}\label{globalest}
\hat{m}_{k,t}(x)=\frac{\sum_{i=1}^{M} \psi_i(x)}{\sum_{i=1}^{M}\kappa_i(x)}=\frac{\Psi_{k,t(x)}}{\mathcal{K}_{k,t}(x)},  \quad \psi_i(x),\kappa_i(x)\in T_{i}(x)\in\mathbb{T}_{k}.
\end{equation*}

For the estimator in \eqref{globalest}, we provide non-asymptotic error bounds in Theorem~\ref{th1} below.
\begin{theorem}\label{th1}
Let Assumptions 1--4 be in force. Consider any agent $k\in\mathcal{M}$ with data exchange and
aggregation procedure as in Algorithm 1 and estimate $\hat{m}_{k,t}$. Then, for $x\in\mathcal{X}$ and any $0<\delta<1$, with probability $1-\delta$,
\begin{align}
\lVert \hat{m}_{k,t}(x) -m(x)
\rVert_2 \leq \beta_{k,t}(x),
\end{align}
where $\beta_{k,t}(x)$ is given by eqns.~\eqref{b1} and~\eqref{b2}.
\end{theorem}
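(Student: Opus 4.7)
\medskip

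The plan is to follow exactly the same error decomposition as in the proof of Lemma~\ref{L:local}, but applied to the aggregated numerator and denominator. Write
\begin{equation*}
\hat{m}_{k,t}(x) - m(x) \;=\; \frac{\Psi_{k,t}(x) - m(x)\,\mathcal{K}_{k,t}(x)}{\mathcal{K}_{k,t}(x)} \;=\; \frac{1}{\mathcal{K}_{k,t}(x)} \sum_{i=1}^{M} \sum_{n=1}^{t_i} K_h(x,\xi_{i,n})\bigl(y_{i,n}-m(x)\bigr),
\end{equation*}
where $t_i$ denotes the time at which agent $i$'s currently stored tuple $T_i(x)\in\mathbb{T}_k$ was produced. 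Substituting $y_{i,n}=m(\xi_{i,n})+\eta_{i,n}$ splits the right-hand side into a bias term and a noise term, which I will bound separately.

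For the bias term, I would use the compact support of the kernel (Assumption~4) together with Lipschitz continuity (Assumption~1). Since $K_h(x,\xi_{i,n})\neq 0$ implies $\lVert x-\xi_{i,n}\rVert_2 \le h$, each summand $K_h(x,\xi_{i,n})\bigl(m(\xi_{i,n})-m(x)\bigr)$ has norm at most $Lh\cdot K_h(x,\xi_{i,n})$, and summing gives
\begin{equation*}
\left\lVert \sum_{i,n} K_h(x,\xi_{i,n})\bigl(m(\xi_{i,n})-m(x)\bigr) \right\rVert_2 \;\le\; Lh \sum_{i,n} K_h(x,\xi_{i,n}) \;=\; Lh\,\mathcal{K}_{k,t}(x).
\end{equation*}
Dividing by $\mathcal{K}_{k,t}(x)$ yields the $Lh$ contribution to $\beta_{k,t}$.

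For the noise term $\sum_{i,n} K_h(x,\xi_{i,n})\eta_{i,n}$, the key move is to reinterpret this sum, over the doubly-indexed set of measurements aggregated into $\mathbb{T}_k$, as a sum of a single martingale-difference sequence. To do so I would flatten the pairs $(i,n)$ into one linear ordering (e.g.\ by global physical time) and define the associated filtration, so that at each step the kernel weight is predictable and the noise increment is conditionally sub-Gaussian by Assumption~3. Once the problem is in this form, it is structurally identical to the single-agent setting of Lemma~\ref{L:local}: applying the same self-normalized concentration bound (\textit{cf.}~the argument in \cite{WACHEL2023100912}) to the aggregated sum, with $\kappa_{k,t}(x)$ replaced by $\mathcal{K}_{k,t}(x)$, gives that, with probability at least $1-\delta$,
\begin{equation*}
\left\lVert \sum_{i,n} K_h(x,\xi_{i,n})\eta_{i,n} \right\rVert_2 \;\le\; 2\sigma\,\alpha_{k,t}(x,\delta),
\end{equation*}
where $\alpha_{k,t}(x,\delta)$ is given by \eqref{b2} with $\kappa$ read as $\mathcal{K}$. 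Combining the bias and noise bounds and dividing by $\mathcal{K}_{k,t}(x)$ yields the statement.

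The main obstacle I anticipate is the measurability/filtration bookkeeping in the flattening step: one has to check that, under Algorithm~1, the kernel weights $K_h(x,\xi_{i,n})$ remain predictable with respect to the past and that the sub-Gaussian inequality \eqref{eq:A_eta} is preserved when noise samples from different agents are interleaved. Provided the agents' noise sequences satisfy Assumption~3 jointly (which is natural to assume for independent physical sensors), the rest of the proof reduces to invoking Lemma~\ref{L:local}'s concentration argument verbatim on the aggregated sequence.
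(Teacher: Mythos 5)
Your proposal is correct and takes essentially the same route as the paper: the paper's proof likewise flattens the doubly-indexed aggregated sums into one merged sequence (index $q=1,\dots,\tau$ with $\tau=\sum_i t_i$), observes that $\hat{m}_{k,t}(x)$ coincides with the local estimator $\hat{\mu}_{\tau}(x)$ of a virtual agent holding all $\tau$ observations, and then invokes Lemma~\ref{L:local} as a black box, whereas you inline that lemma's bias/noise decomposition and concentration step explicitly. The filtration and joint sub-Gaussianity bookkeeping you flag is indeed the delicate point; the paper leaves it implicit, noting only that the argument requires all agents to operate with the same noise proxy bound $\sigma$.
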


\begin{proof}
By definition,
\begin{align}
\hat{m}_{k,t}=\frac{\sum_{i=1}^{M} \psi_i(x)}{\sum_{i=1}^{M}\kappa_i(x)}=&\frac{\sum_{i=1}^{M}\sum_{n=1}^{t_i} K_h(x,\xi_{i,n})y_{i,n}}{\sum_{i=1}^{M}\sum_{n=1}^{t_i} K_h(x,\xi_{i,n})},
\end{align}
where $t_i$ denotes the time step at which $\psi_i$ and $\kappa_i$ were calculated. Now, we introduce a merged index $q$ that takes values from $1$ to $\tau=\sum_{i=1}^{M}t_i$ and mappings $i_q$ and $n_q$, that transfer a single $q$ back to the original $i$ and $n$, respectively. Thus,
\begin{align}
    \frac{\sum_{i=1}^{M}\sum_{n=1}^{t_i} K_h(x,\xi_{i,n})y_{i,n}}{\sum_{i=1}^{M}\sum_{n=1}^{t_i} K_h(x,\xi_{i,n})}=&\frac{\sum_{q=1}^{\tau}K_h(x,\xi_{i_q,n_q})y_{i_q,n_q}}{\sum_{q=1}^{\tau}K_h(x,\xi_{i_q,n_q})}\\
    =&\frac{\sum_{q=1}^{\tau}K_h(x,\xi_{i_q,n_q})}{\kappa_{\tau}(x)}y_{i_q,n_q}\\
    =&\hat{\mu}_{\tau}(x).
\end{align}
This can be interpreted as the local estimator of an agent, that directly acquired all $\tau$ observations. Hence, we can apply the error bound from Lemma 1, which completes the proof. 
\end{proof}
As we have shown, Theorem 1 can be proven by reinterpreting Lemma~1 since the final estimate combines the acquired numerators and denominators, and is, in fact, the same as the estimate calculated from raw data transferred to a single agent. This is however possible only if all the agents operate with the same upper bound of the noise proxy variance $\sigma$.

\section{Numerical experiments}
In this section, we illustrate the main concept of the proposed approach\footnote{The Python code to obtain the numerical results is available at \url{https://github.com/kkowalc/Kernel-based-learning-with-guarantees-for-multi-agent-applications}.}. To this end, we use a network of 25 agents with randomly selected topology, as shown in Fig. \ref{fig:Topology}. 
\begin{figure}[h!]
    \centering
\includegraphics[width=0.6\textwidth]{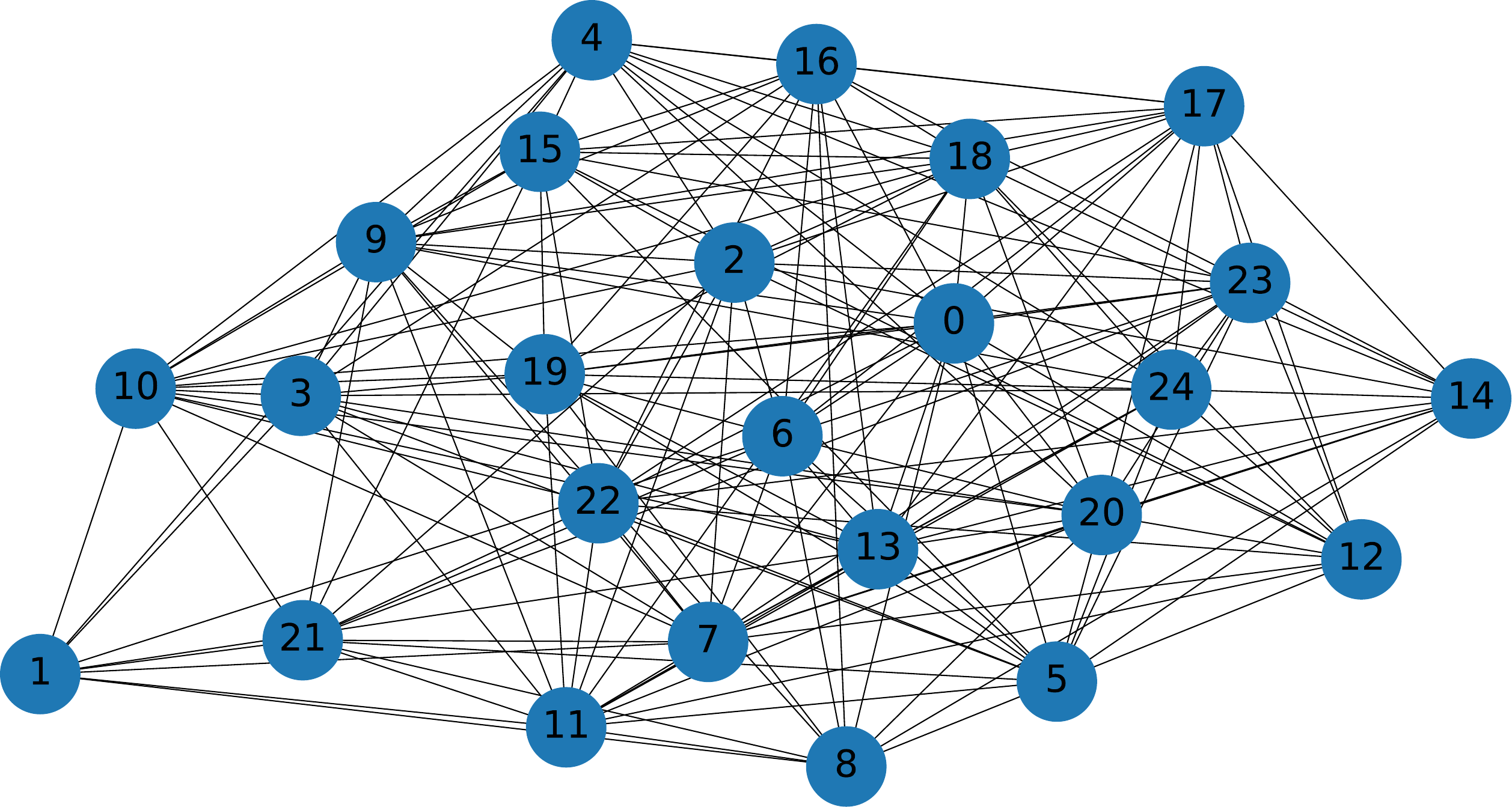}
    \caption{Random topology network with 25 nodes. }
    \label{fig:Topology}
\end{figure}

In the experiments we consider a nonlinearity $m\colon \mathbb{R}^2\rightarrow\mathbb{R}$ being a mixture of three Gaussian surfaces $\mathcal{N}([0,0], 0.5\mathbb{I})$, $\mathcal{N}([1,2], 0.55\mathbb{I})$, $\mathcal{N}([2,-2], 0.7\mathbb{I})$. The output noise sequences $\eta_{k,t}$ for every agent $k$ are sampled from a normal distribution $\mathcal{N}(0,0.05)$. We assume that the total region of interest $\mathcal{D}$ is a set $[-2,2]\times[-2,2]$ and the estimation grid $\mathcal{X}$ is evenly spaced with a step $0.25$. The explanatory data $\xi_{k,t}$ is generated from a normal distribution $\mathcal{N}(\mu_{\xi_k},\sigma_{\xi_k})$. Both $\{\xi_{k,t}\}$ and $\{\eta_{k,t}\}$ are mutually independent. For simplicity of calculations and clarity of presentation,  the parameters $\mu_{\xi,k}$ and $\sigma_{\xi,k}$ are selected to ensure that $\mathcal{D}\subset\mathcal{D}_1\cup\mathcal{D}_2\cup\dots\cup\mathcal{D}_M$; otherwise, it would be necessary to propagate the bounds with the Lipschitz constant for the regions where measurements could not be obtained. The required parameters $L$  and $\delta$ are set to $0.3$ and $0.001$, respectively. 

\begin{figure}[h]
    \centering
\includegraphics[width=0.6\textwidth]{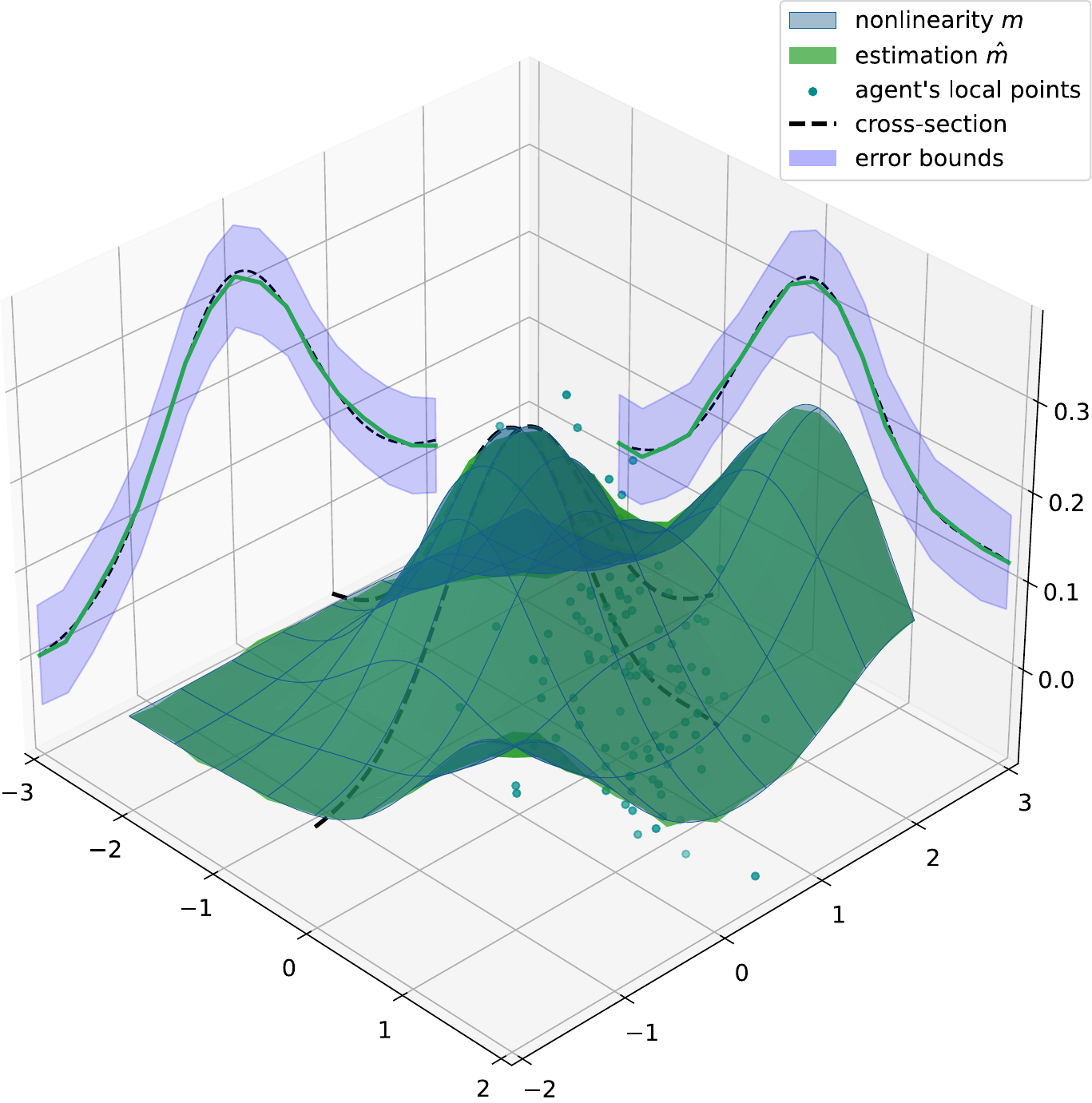}
    \caption{A model of the estimated phenomenon provided by agent $k=13$ after $t=5000$ time steps, with cross-sections presenting its confidence bounds.}
    \label{fig:NS1}
\end{figure}
In Fig.~\ref{fig:NS1} we present the estimation results for a randomly selected agent (here $k=13$) over a time horizon $t=5000$. The bandwidth parameter $h$ is set to $0.15$ for the whole experiment. Due to the collaboration with other agents, the quality of the results is satisfactory for the total region of interest and not only for the local domain of the considered agent. 

As we mentioned in the previous sections, transferring all the data to a single processing centre usually requires a significant communication cost. One of the main goals of distributed learning is to provide a result that is close to the centralised approach, and the proposed data exchange and aggregation algorithm has the possibility (under a proper number of connections between the nodes) to achieve it. In Fig.~\ref{fig:NS2}. we present a side-by-side comparison of the model provided by the single agent and the model calculated in a centralised way with the usage of all the agents' local data. 
\begin{figure}[H]
    \centering
\includegraphics[width=0.9\textwidth]{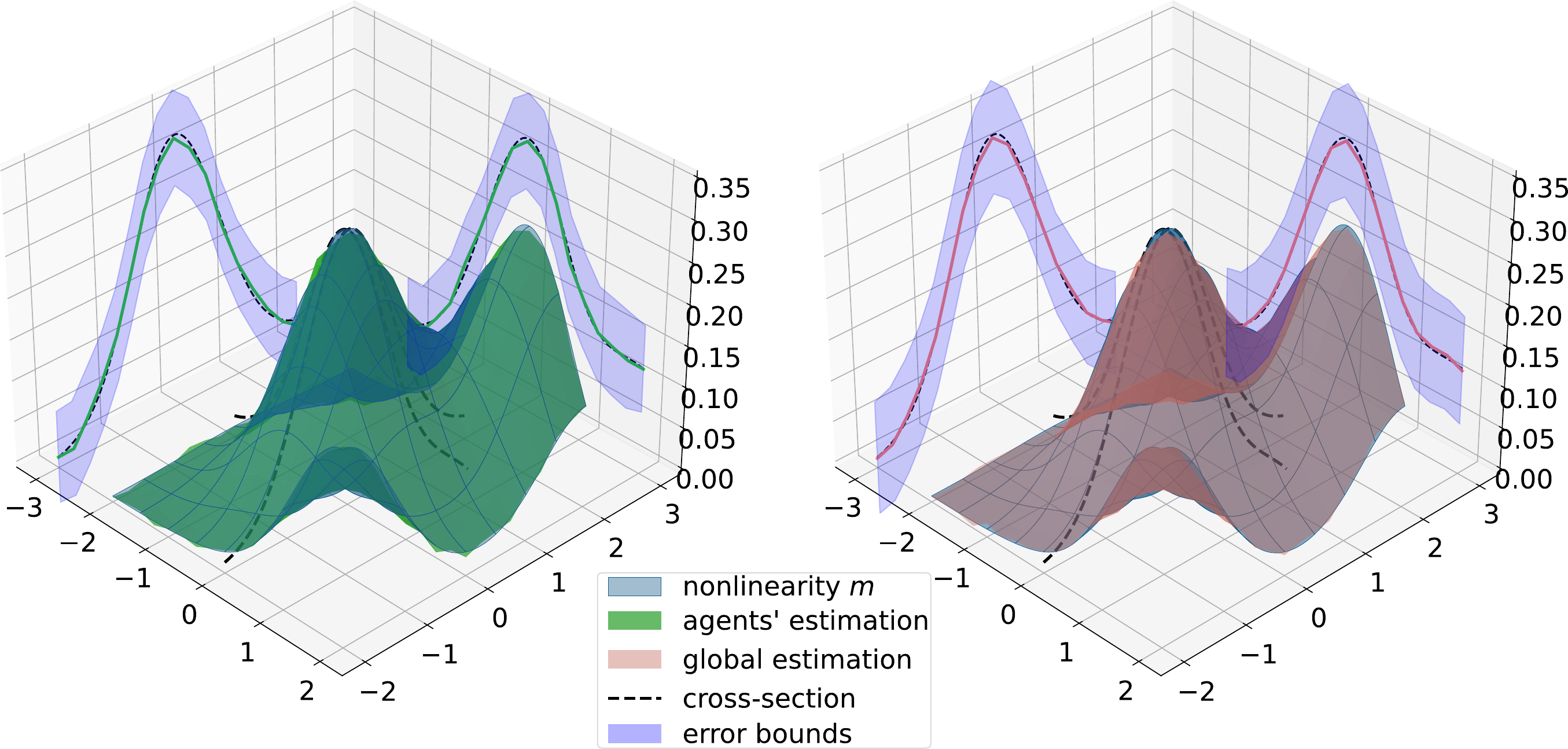}
    \caption{Comparison of the model provided by a single agent, with a global model, that uses all the agents' data.}
    \label{fig:NS2}
\end{figure}
In Fig.~\ref{fig:NS3} we present the evolution of our confidence bound over time for a selected estimation point $x=(0,0)$. At the beginning the bound is high due to the lack of reacquired tuples, but with time more tuples for the estimation point are obtained. This process however, slows down with time, since the number of agents in the network is finite, hence no new tuples are acquired and the improvement of the bounds is a result of updating existing tuples. 
\begin{figure}[H]
    \centering
\includegraphics[width=0.7\textwidth]{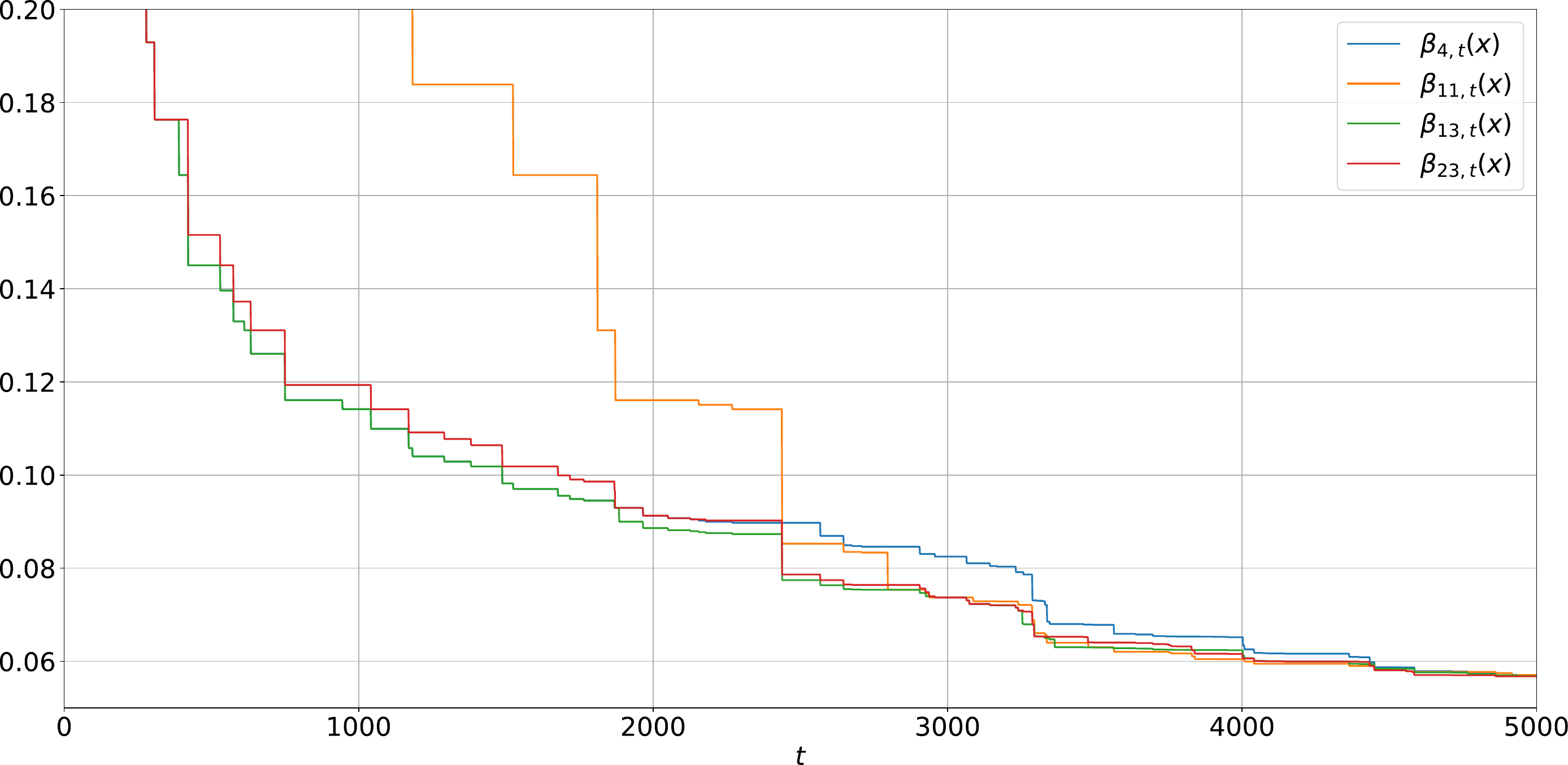}
    \caption{Bound evolution over time for a selected estimation point and a few selected agents. }
    \label{fig:NS3}
\end{figure}
\section{Conclusions}
In this paper, we have proposed a new distributed learning algorithm, designed for learning multivariate phenomena. Following the data exchange and aggregation procedure as described in Algorithm 1 and using a distributed estimator, a single agent is able to model a phenomenon in regions that are far beyond its local scope. We have also provided corresponding non-asymptotic error bounds that hold with a user-define confidence level.

We have formally investigated, under rather mild assumptions, the non-asymptotic properties of the proposed method. Also, we have illustrated the obtained theoretical results via numerical simulations, which clearly show the advantages of the techniques described in this paper. 

As future work, we plan to expand the proposed approach to a more general setting, where we allow the agents to have different bandwidth parameter $h$ and noise proxy variance upper bounds $\sigma$. Also we aim to investigate network topology properties in order to develop a technique for setting data sharing flags.

\section*{Appendix}
\begin{proof}[Proof \textnormal{(}of Lemma \ref{L:local}\textnormal{).}] We begin with the observation that
\begin{align} \label{eq:uppbA}
\norm{\sum_{n=1}^t \frac{K_h(x, \xi_{n})}{\kappa _{t}(x)} y_{n} - m(x)}_2 \leq \sum_{n=1}^{t}\theta_n \norm{ m( \xi _{n}) - m(x) }_2 + \norm{ \sum_{n=1}^t \theta_n \eta_{n} }_2,
\end{align}
where $\theta _n := K_h(x, \xi_{n}) / \kappa_{t}(x)$.
Note that $\Sigma_{n=1}^t \theta_n = 1$.
Due to Assumption $4$, if $K_h(x, \xi_{n}) > 0$, then $\norm{x - \xi _{n}}_2 / h \leq 1$. Therefore, (\textit{cf.}~Assumption $1$)
\begin{align*}
K_h(x, \xi_{n}) > 0\quad\Longrightarrow\quad \norm{ m(\xi_{n}) - m(x)}_2 \leq L \norm{ x -\xi_{n}}_2 \leq Lh,
\end{align*}
and since the weights $\theta _{n}$ sum up to $1$,
\begin{align*}
\sum_{n=1}^t \theta_n \norm{ m(\xi_{n}) - m(x) }_2 \leq Lh.
\end{align*}
For the last term in (\ref{eq:uppbA}), observe that
\begin{align}\label{eq:transf}
\norm{ \sum\nolimits_{n=1}^t \theta_n \eta_{n} }_2
= \frac{1}{\kappa_{t}(x)} \norm{ \sum\nolimits_{n=1}^t K_h(x, \xi_{n}) \eta _{n}}_2.
\end{align}
According to Lemma 2, the right-hand side of Eq.~\eqref{eq:transf} is upper bounded (with probability $1-\delta$) by
\begin{align*}
\frac{1}{\kappa_{t}(x)} \sigma \sqrt{2 \log \left( \delta ^{-1} \left(1 + \sum\nolimits_{n=1}^t K_h^2(x, \xi_{n})\right)^{d/2} \right) \left(1 + \sum\nolimits_{n=1}^t K_h^2 (x, \xi_{n})\right)}.
\end{align*}
Furthermore, since $K_h(x, \xi_{n}) \leq 1$ (\textit{cf.}~Assumption 4), we
obtain
\begin{align*}
\frac{1}{\kappa_{t}(x)} \norm{ \sum\nolimits_{n=1}^t K_h(x, \xi_{n}) \eta_{n}}_2 \leq \sigma \sqrt{2 \log\left( \delta^{-1} \left(1 + \kappa_{t}(x)\right)^{d/2}\right)}\frac{\sqrt{1 +\kappa_{t}(x)}}{\kappa_{t}(x)}.
\end{align*}
Observe next that, if $\kappa_{t}(x) > 1$, then
\begin{align*}
\frac{\sqrt{1 + \kappa_{t}(x)}}{\kappa_{t}(x)}
< \frac{\sqrt{2 \kappa_{t}(x)}}{\kappa_{t}(x)}
= \frac{\sqrt{2}}{\sqrt{\kappa_{t}(x)}}.
\end{align*}
Therefore, with probability $1-\delta$, for $\kappa_{t}(x) > 1$,
\begin{align*}
\frac{1}{\kappa_{k,t}(x)} \norm{ \sum\nolimits_{n=1}^t K_h(x, \xi_{n}) \eta_{n}}_2
\leq \frac{2 \sigma}{\kappa_{t}(x)} \sqrt{\kappa_{t}(x) \log \left( \delta^{-1} \left(1 + \kappa_{t}(x)\right)^{d/2}\right)},    
\end{align*}
whereas for $0 < \kappa_{t} \leq 1$,
\begin{align*}
\frac{1}{\kappa _{t}(x)} \left| \sum_{n=1}^t K_h(x, \xi_{n}) \eta_{n} \right|
&\leq \frac{\sigma}{\kappa_{t}(x)} \sqrt{2 \log\left(\delta^{-1} \left(1 + \kappa_{t}(x)\right)^{d/2}\right)} \sqrt{1 + \kappa_{t}(x)} \\
&\leq \frac{2 \sigma}{\kappa_{t}(x)} \sqrt{\log(\delta^{-1}2^{d/2})},
\end{align*}
which completes the proof.
\end{proof} 
   
\begin{lemma}\label{Lem:Tech_A}
 Let $\{v_t\in\mathbb{R}\colon\,t\in\mathbb{N}\}$ and $\{\eta_t\in\mathbb{R}^d\colon\,t\in\mathbb{N}\}$  be  stochastic processes. Assume that there exists some $\sigma>0$ such that, for every $\gamma_t\in\mathbb{R}^d$ (possibly a function of $v_t$), and every $t\in\mathbb{N}$,
    \begin{equation}
    \label{eqn:sub_gaussian}
        \mathbb{E}[\exp(\gamma_t^\top\eta_t)\vert \eta_{1:t-1},v_{1:t}]\le\exp\left(\frac{\gamma_t^\top\gamma_t\sigma^2}{2}\right).
    \end{equation}
Define
\begin{align*}
S_t := \sum\nolimits_{n=1}^t v_n \eta _n \quad \text{and} \quad V_t := \sum\nolimits_{n=1}^t v_n^2.
\end{align*}
Then, for every $t\in\mathbb{N}$ and $0<\delta <1$, with probability $1-\delta$,
\begin{align*}
S_t^\top S_t \leq 2 \sigma^2\log \left[ \frac{(V_t+1)^{d/2}}{\delta} \right] (V_t + 1).
\end{align*}
\end{lemma}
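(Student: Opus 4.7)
This is a classical multivariate self-normalized concentration inequality, and the natural route is the method of mixtures (in the spirit of de la Pe\~na--Lai--Shao, Abbasi-Yadkori et al.). The argument splits into three steps: build an exponential supermartingale indexed by a free parameter $\gamma\in\mathbb{R}^d$, average it against a carefully chosen Gaussian prior on $\gamma$ to obtain a scalar self-normalized supermartingale, and then conclude with Markov's inequality.

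\textbf{Step 1: Exponential supermartingale.} For any deterministic $\gamma\in\mathbb{R}^d$, define
\begin{equation*}
M_n(\gamma) := \exp\!\left(\gamma^\top S_n - \tfrac{\sigma^2}{2}\gamma^\top\gamma\, V_n\right),\qquad M_0(\gamma)=1.
\end{equation*}
Applying the hypothesis \eqref{eqn:sub_gaussian} with the \emph{data-dependent} choice $\gamma_n = v_n\gamma$ (which is legitimate because $v_n$ is measurable with respect to the conditioning $\sigma$-algebra $\sigma(\eta_{1:n-1},v_{1:n})$) yields
\begin{equation*}
\mathbb{E}\!\left[\left.\tfrac{M_n(\gamma)}{M_{n-1}(\gamma)}\right|\eta_{1:n-1},v_{1:n}\right]=\mathbb{E}\!\left[\exp(v_n\gamma^\top\eta_n)\mid\cdot\right]\exp\!\left(-\tfrac{\sigma^2 v_n^2\gamma^\top\gamma}{2}\right)\le 1,
\end{equation*}
so $M_n(\gamma)$ is a nonnegative supermartingale with unit mean for every fixed $\gamma$.

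\textbf{Step 2: Gaussian mixture.} Let $\mu$ be the density of $\mathcal{N}(0,\sigma^{-2}I_d)$ on $\mathbb{R}^d$ and define $\bar M_t := \int_{\mathbb{R}^d} M_t(\gamma)\,\mu(d\gamma)$. By Fubini, $\bar M_t$ inherits the nonnegative-supermartingale property with $\mathbb{E}\bar M_t\le 1$. Completing the square in $\gamma$ inside the exponent (the quadratic coefficient is $-\tfrac{\sigma^2}{2}(V_t+1)\,\gamma^\top\gamma$ after combining the prior and the martingale terms) and evaluating the resulting Gaussian integral produces the closed form
\begin{equation*}
\bar M_t = (V_t+1)^{-d/2}\exp\!\left(\frac{S_t^\top S_t}{2\sigma^2(V_t+1)}\right).
\end{equation*}
The choice of prior variance $\sigma^{-2}$ is precisely what makes the normalizer read $(V_t+1)$ rather than $(V_t+\lambda/\sigma^2)$.

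\textbf{Step 3: Markov and unwinding.} Markov's inequality applied to $\bar M_t$ gives $\mathbb{P}(\bar M_t\ge 1/\delta)\le\delta$. On the complementary event, the closed-form expression above together with taking logarithms and rearranging delivers exactly
\begin{equation*}
S_t^\top S_t \le 2\sigma^2\log\!\left[\frac{(V_t+1)^{d/2}}{\delta}\right](V_t+1),
\end{equation*}
as required.

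\textbf{Anticipated obstacle.} The only subtle point is the admissibility of the substitution $\gamma_n=v_n\gamma$ in Step~1; this is why the hypothesis \eqref{eqn:sub_gaussian} is stated with $v_{1:t}$ (rather than $v_{1:t-1}$) in the conditioning, a design choice that must be used explicitly. Once the supermartingale is in hand, the remaining work is deterministic Gaussian calculus plus a single application of Markov, so no probabilistic subtleties are expected beyond that point.
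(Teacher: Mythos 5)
Your proposal is correct and follows essentially the same route as the paper: the paper's proof is exactly this method of mixtures, with its Lemma 3 supplying your Step 1 supermartingale bound, a standard Gaussian mixing variable $\Lambda$ playing the role of your prior, and Markov's inequality closing the argument. The only cosmetic difference is that the paper normalizes $\sigma=1$ without loss of generality and mixes against $\mathcal{N}(\mathbf{0},\mathbb{I})$, whereas you keep $\sigma$ explicit and absorb it into the prior covariance $\sigma^{-2}\mathbb{I}$ --- an equivalent reparametrization.
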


\begin{proof}
Without loss of generality, let $\sigma = 1$. For every $\lambda \in \mathbb{R}^d$ let
\begin{align*}
w_t(\lambda) :=\exp\left( \lambda^\top S_t - \frac{1}{2} \lambda^\top\lambda V_t \right).
\end{align*}
From Lemma 3, we note that, for every $\lambda \in \mathbb{R}^d$, $\mathbb{E}\{ w_t(\lambda)\} \leq 1$. Let now $\Lambda $ be a $\mathcal{N}( \textbf{0},\mathbb{I})$ random variable, independent of all other variables. Clearly, $\mathbb{E}\{ w_{t}(\Lambda) | \Lambda \} \leq 1$. Define
\begin{align*}
w_t := \mathbb{E}\{ w_t(\Lambda) | v_n, \eta_n\colon n\in\mathbb{N} \}.
\end{align*}
Then $\mathbb{E}\{ w_t \} \leq 1$ since, by the tower property of conditional expectation,
\begin{align*}
\mathbb{E}\{ w_t\}
&= \mathbb{E}\{\mathbb{E}\{ w_{t}( \Lambda ) | v_n, \eta_n\colon n\in\mathbb{N} \}  \} \\
&= \mathbb{E}\{ w_{t}( \Lambda )  \} \\
&= \mathbb{E}\{ \mathbb{E}\{ w_{t}(\Lambda) | \Lambda \} \} \\
&\leq 1.
\end{align*}
We can also express $w_t$ directly as
\begin{align*}
w_t 
&=\frac{1}{\sqrt{(2\pi)^d}} \int_{\mathbb{R}^d} \exp\left( \lambda^\top S_t - \frac{1}{2} \lambda^\top\lambda V_t \right) \exp\left( -\frac{1}{2}\lambda^\top\lambda\right) d\lambda \\
&= \frac{1}{\sqrt{(2\pi)^d}} \int_{\mathbb{R}^d} \exp\left( \lambda^\top S_t - \frac{1}{2} \lambda^\top\lambda V_t-\frac{1}{2}\lambda^\top\lambda \right)  d\lambda \\
&= \frac{1}{\sqrt{(2\pi)^d}} \int_{\mathbb{R}^d} \exp\left( \lambda^\top S_t - \frac{1}{2}(V_t+1)\lambda^\top\lambda \right)  d\lambda \\
&= \frac{1}{\sqrt{(2\pi)^d}} \int_{\mathbb{R}^d} \exp\left( \lambda^\top S_t - \frac{1}{2}(V_t+1)\lambda^\top\lambda +\frac{1}{2}\frac{1}{V_t+1}S_t^\top S_t-\frac{1}{2}\frac{1}{V_t+1}S_t^\top S_t\right)  d\lambda \\
&=\frac{1}{\sqrt{(2\pi)^d}} \int_{\mathbb{R}^d} \exp\left( -\frac{1}{2}(V_t+1)\left[\lambda-\frac{1}{V_t+1}S_t\right]^\top\left[\lambda-\frac{1}{V_t+1}S_t\right]\right)\exp\left(\frac{1}{2}\frac{1}{V_t+1}S_t^\top S_t \right)  d\lambda \\
&=(V_t + 1)^{-d/2} \exp\left(\frac{1}{2}\frac{1}{V_t+1}S_t^\top S_t \right).
\end{align*}
Therefore $\mathbb{P}\{ \delta w_t \geq 1 \} $ is equal to
\begin{align*}
&\mathbb{P}\left\{ \frac{\delta }{(V_t+1)^{d/2}}\exp\left( \frac{1}{2} \frac{S_t^\top S_t}{V_t + 1}\right) \geq 1  \right\} \\
&\qquad\qquad\qquad\qquad = \mathbb{P}\left\{  \exp\left( \frac{1}{2}\frac{S_t^\top S_t}{V_t + 1}\right) \geq (V_t+1)^{d/2} \frac{1}{\delta}  \right\} \\
&\qquad\qquad\qquad\qquad = \mathbb{P}\left\{ \frac{S_t^\top S_t}{V_t + 1} \geq 2 \log \left[ (V_t+1)^{d/2} \frac{1}{\delta} \right] \right\} \\
&\qquad\qquad\qquad\qquad = \mathbb{P}\left\{ S_t^\top S_t \geq 2 \log \left[ \frac{(V_t+1)^{d/2}}{\delta} \right] (V_t + 1) \right\}. 
\end{align*}
Recall now that $\mathbb{E}\{w_t  \}\leq 1$. Hence, due to Markov's inequality,
\begin{align*}
\mathbb{P}\{ \delta w_t \geq 1 \} \leq \delta\, \mathbb{E}\{ w_t \} \leq \delta,
\end{align*}
which completes the proof.
\end{proof}

\begin{lemma} 
    Let $\{v_t\in\mathbb{R}\colon\,t\in\mathbb{N}\}$ and $\{\eta_t\in\mathbb{R}^d\colon\,t\in\mathbb{N}\}$  be  stochastic processes. Assume that there exists some $\sigma>0$ such that, for every $\gamma_t\in\mathbb{R}^d$ (possibly a function of $v_t$), and every $t\in\mathbb{N}$
    \begin{equation}
    \label{eqn:sub_gaussian}
        \mathbb{E}[\exp(\gamma_t^\top\eta_t)\vert \eta_{1:t-1},v_{1:t}]\le\exp\left(\frac{\gamma_t^\top\gamma_t\sigma^2}{2}\right).
    \end{equation}
    Define, for every fixed $\lambda\in\mathbb{R}^d$, 
    \[
        \omega_t(\lambda)\coloneqq \exp\left(\sum_{n=1}^t\frac{\lambda^\top\eta_n v_n}{\sigma}-\frac{1}{2}\lambda^\top\lambda v_n^2\right).
    \]
    Then, $\mathbb{E}[\omega_t(\lambda)]\le 1$.
\end{lemma}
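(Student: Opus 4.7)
The plan is to show that $\omega_t(\lambda)$ is a supermartingale with respect to the natural filtration $\mathcal{F}_t = \sigma(\eta_{1:t}, v_{1:t+1})$ (or any filtration that at time $t$ has observed $\eta_{1:t}$ and $v_{1:t+1}$), and then conclude $\mathbb{E}[\omega_t(\lambda)] \le \mathbb{E}[\omega_0(\lambda)] = 1$ by iterated conditioning.

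First I would write the telescoping factorisation
\begin{equation*}
\omega_t(\lambda) = \omega_{t-1}(\lambda)\cdot\exp\!\left(\frac{\lambda^\top \eta_t v_t}{\sigma} - \tfrac{1}{2}\lambda^\top\lambda\, v_t^2\right),
\end{equation*}
noting that $\omega_{t-1}(\lambda)$ is measurable with respect to $\sigma(\eta_{1:t-1}, v_{1:t})$, and the factor $\exp(-\tfrac12\lambda^\top\lambda\, v_t^2)$ is measurable with respect to the same $\sigma$-algebra once we include $v_t$ in the conditioning. Taking the conditional expectation given $\eta_{1:t-1}, v_{1:t}$, both of these factors pull out, leaving
\begin{equation*}
\mathbb{E}\!\left[\omega_t(\lambda)\mid \eta_{1:t-1}, v_{1:t}\right] = \omega_{t-1}(\lambda)\cdot\exp\!\left(-\tfrac12\lambda^\top\lambda\, v_t^2\right)\cdot\mathbb{E}\!\left[\exp\!\left(\tfrac{\lambda^\top v_t}{\sigma}\,\eta_t\right)\,\Big|\,\eta_{1:t-1},v_{1:t}\right].
\end{equation*}

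The key step is to invoke the sub-Gaussian hypothesis \eqref{eqn:sub_gaussian} with the choice $\gamma_t = \lambda v_t / \sigma$, which is permissible because this $\gamma_t$ is a (measurable) function of $v_t$. Substituting gives
\begin{equation*}
\mathbb{E}\!\left[\exp\!\left(\tfrac{\lambda^\top v_t}{\sigma}\,\eta_t\right)\,\Big|\,\eta_{1:t-1},v_{1:t}\right] \le \exp\!\left(\tfrac{1}{2}\cdot\tfrac{\lambda^\top\lambda\, v_t^2}{\sigma^2}\cdot\sigma^2\right) = \exp\!\left(\tfrac{1}{2}\lambda^\top\lambda\, v_t^2\right),
\end{equation*}
which exactly cancels the deterministic $\exp(-\tfrac12\lambda^\top\lambda\, v_t^2)$ factor. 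Hence $\mathbb{E}[\omega_t(\lambda)\mid \eta_{1:t-1}, v_{1:t}] \le \omega_{t-1}(\lambda)$, so $\{\omega_t(\lambda)\}$ is a non-negative supermartingale with $\omega_0(\lambda)\equiv 1$.

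Finally, by the tower property, $\mathbb{E}[\omega_t(\lambda)] \le \mathbb{E}[\omega_{t-1}(\lambda)] \le \dots \le \mathbb{E}[\omega_0(\lambda)] = 1$, which yields the claim. The only real subtlety is matching the ``possibly a function of $v_t$'' clause in the sub-Gaussian assumption to the particular $\gamma_t = \lambda v_t/\sigma$ used here — so I expect the main thing to be careful about is the filtration bookkeeping (that $v_t$ is measurable with respect to the conditioning $\sigma$-algebra at step $t$), rather than any genuine computational obstacle.
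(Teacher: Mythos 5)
Your proposal is correct and is essentially the paper's own argument: the paper also factors $\omega_t(\lambda)$ into the product of one-step factors $D_n=\exp\bigl(\lambda^\top\eta_n v_n/\sigma-\tfrac12\lambda^\top\lambda v_n^2\bigr)$, pulls the $\sigma(\eta_{1:t-1},v_{1:t})$-measurable part out of the conditional expectation, applies the sub-Gaussian hypothesis with $\gamma_t=\lambda v_t/\sigma$ to get $\mathbb{E}[\omega_t(\lambda)\mid\eta_{1:t-1},v_{1:t}]\le\omega_{t-1}(\lambda)$, and iterates via the tower property. Your explicit supermartingale framing is just a cleaner name for the same chain of inequalities, so there is nothing to fix.
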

\begin{proof}
    Let
    \[
        D_n\coloneqq \exp\left(\frac{\lambda^\top\eta_n v_n}{\sigma}-\frac{1}{2}\lambda^\top\lambda v_n^2\right), \quad n = 1, \dots, t.
    \]
    Clearly, $\omega_t(\lambda)=D_1 D_2 \cdots D_t$.
    Note that
    \begin{align*}
        \mathbb{E}[D_t\mid \eta_{1:t-1},v_{1:t}]&=\mathbb{E}\left[\left. \exp\left(\frac{\lambda^\top\eta_t v_t}{\sigma}\right) \bigg/ \exp\left(\frac{1}{2}\lambda^\top\lambda v_t^2\right)\,\right|\, \eta_{1:t-1},v_{1:t}\right]\\
        &=\mathbb{E}\left[\exp\left(\left.\frac{\lambda^\top\eta_t v_t}{\sigma}\right)\right| \eta_{1:t-1},v_{1:t}\right] \bigg/ \exp\left(\frac{1}{2}\lambda^\top\lambda v_t^2\right).
    \end{align*}
    Let $\gamma_t = \frac{\lambda v_t}{\sigma}$, then $\exp\left(\frac{1}{2}\lambda^\top\lambda v_t^2\right)=\exp\left(\frac{\gamma_t^\top\gamma_t\sigma^2}{2}\right)$ and, due to~\eqref{eqn:sub_gaussian}, 
    \begin{align*}
    \mathbb{E}\left[\left. \exp\left(\frac{\lambda^\top\eta_t v_t}{\sigma}\right)\right| \eta_{1:t-1},v_{1:t}\right]\leq \exp\left(\frac{\gamma_t^\top\gamma_t\sigma^2}{2}\right), 
    \end{align*}
    therefore
    \[
        \mathbb{E}[D_t\mid \eta_{1:t-1},v_{1:t}] \le \frac{\exp\left(\frac{\gamma_t^\top\gamma_t\sigma^2}{2}\right)}{\exp\left(\frac{\gamma_t^\top\gamma_t\sigma^2}{2}\right)}=1.
    \]
    
    Next, for every $t\in\mathbb{N}$,
    \begin{align*}
        \mathbb{E}[\omega_t(\lambda)\mid\eta_{1:t-1},v_{1:t}]&=\mathbb{E}[D_1\cdots D_{t-1}D_t\mid \eta_{1:t-1},v_{1:t}]\\
        &= D_1\cdots D_{t-1}\mathbb{E}[D_t\mid \eta_{1:t-1},v_{1:t}]\leq\omega_{t-1}(\lambda).
    \end{align*}
    Therefore,
    \begin{align*}
        \mathbb{E}[\omega_t(\lambda)]&=\mathbb{E}[\mathbb{E}[\omega_t(\lambda)\mid\eta_{1:t-1},v_{1:t}]]\\
        &\le\mathbb{E}[\omega_{t-1}(\lambda)]\le\cdots\le \mathbb{E}[\omega_1(\lambda)]\\
        &=\mathbb{E}[\mathbb{E}[D_1\mid v_1]]\le 1,
    \end{align*}
    which completes the proof.
\end{proof}
\bibliographystyle{plain} 
\bibliography{Literature}
\end{document}